





\documentclass[final]{l4dc2024}


\title[Convex neural network synthesis]{Convex neural network synthesis for robustness in the 1-norm}
\usepackage{times}
\usepackage{subcaption}
\usepackage{wrapfig}

\newcommand{\TR}{\color{black}}




\author{%
 \Name{Ross Drummond} \Email{ross.drummond@sheffield.ac.uk}\\
 \addr Dept. of Automatic Control and Systems Engineering,  University of Sheffield,  Sheffield, S1 3JD, UK.\\
 \AND
 \Name{Chris Guiver} \Email{C.Guiver@napier.ac.uk}\\
 \addr {\TR School of Computing, Engineering} \& the Built Environment, Edinburgh Napier University, Edinburgh, EH10 5DT, UK.\\
  \AND
 \Name{Matthew C. Turner} \Email{m.c.turner@soton.ac.uk}\\
 \addr School of Electronics and Computer Science, University of Southampton, Southampton, SO17 1BJ, UK.\\
}

\begin{document}

\maketitle

\begin{abstract}%
With neural networks being used to control safety-critical systems, they increasingly have to be both accurate (in the sense of matching inputs to outputs) and robust. However, these two properties are often at odds with each other and a trade-off has to be navigated. To address this issue, this paper proposes a method to generate an approximation of a neural network which is certifiably more robust.  Crucially, the method is fully convex and posed as a semi-definite programme. An application to robustifying model predictive control is used to demonstrate the results. The aim of this work is to introduce a method to
navigate the neural network robustness/accuracy trade-off. 
\end{abstract}

\begin{keywords}%
  Neural network robustness, convex synthesis, accuracy \textit{vs.} robustness trade-off.
\end{keywords}

\section{Introduction}
Neural networks have emerged as {\TR powerful} and flexible  nonlinear function approximators for mapping input/output data. For control applications, the main strengths of neural networks are, arguably, their ability to learn complex feedback policies for control problems which are challenging to formulate mathematically. Examples include problems where only data is available (and no model exists) or when the costs/models/constraints may be unknown, as encountered, for instance, in image-based control of autonomous robotics. However, neural networks also have weaknesses. They suffer from a lack of robustness and explainability; two issues which have, so far, prevented their widespread adoption into safety-critical systems. As a consequence, whilst neural networks have thrived as control systems for technologies where failure is not catastrophic (such as robotic demonstrations in the lab), there remains scepticism about their value for systems where crashes can be deadly (such as with aircraft autopilot software). For these safety-critical applications, traditional control theory still dominates. 

The growing appreciation of the need to embed robustness more deeply into neural network design mirrors the concerns raised by the control community during the 1980s in light of high profile incidents, such as the Gripen JAS39 prototype incident detailed in~\cite{stein2003respect}. In the wake of these incidents, robust control theory emerged as a new design paradigm that placed as much emphasis on a control system's robustness as its performance (\cite{green2012linear}), with the control synthesis results building upon earlier robustness analysis results such as the Zames-Falb multipliers of~\cite{zames1968stability}. Arguably, a similar transition has yet to take place with neural networks, even after several  well-publicised incidents, such as deadly autonomous vehicle crashes. The question then arises as to whether the strengths of neural networks (namely, as powerful, yet simple to train, function approximattors) can be balanced against their weaknesses (e.g. their lack of tight robustness guarantees) in a rigorous and quantifiable way.

\textbf{Contribution:} Addressing this balance between neural network robustness and accuracy is the focus of this paper. Specifically, a method to address the following neural network design problem is developed: 
\begin{quote}``\textit{Given a neural network, generate another one which is quantifiably more robust yet has a similar input/output mapping as the  original neural network}.'' \end{quote}
The solution to this problem presented here in Theorem~\ref{thm} is referred to as a  {\TR \em neural network synthesis method}, as it is motivated by the linear matrix inequalities (LMIs) widely used in robust control synthesis~\cite{dullerud2013course}. Crucially, the proposed method is fully convexified in the sense that the weights and biases of the robustified network are obtained from the solutions of a semi-definite programme (SDP) that trade-off robustness against accuracy. This convexification is achieved by: \textit{i}) using the classical slope restrictions on the activation functions and, more crucially, \textit{ii}) using the 1-norm for the robustness bounds instead of the 2-norm commonly used with Lipschitz bound-type results. We demonstrate the validity of our results in an application towards robustifying  model predictive control (MPC) policy. 

\textbf{Literature:} The results of~\cite{pauli2022neural} and~\cite{wang2023direct} are perhaps the most relevant to this paper. Focussing on~\cite{pauli2022neural}, a method to embed robustness (imposed by LMI constraints) into the neural network training was developed. This was achieved by using the ADMM algorithm to switch between training on the data and imposing Lipschitz bounds. A similar approach was developed in~\cite{wang2023direct}, with points of differences being the focus on equilibrium networks and the direct parameterisation of the weights to satisfy Lipschitz bounds. One issue with the approach of~\cite{pauli2022neural} is that, because robustness is measured using the 2-norm (as in Lipschitz bounds), then a bilinearity appears in the training process, as both the neural network weights/biases and the multipliers of the robustness LMIs are decision variables for the solver. This bilinearity is a source of non-convexity for the optimisation problem, and has appeared in earlier synthesis results such as~\cite{drummond2022reduced} on generating reduced-order approximations of large neural networks. By contrast, {\TR here we propose formulating} the robustness bounds in terms of the 1-norm instead of the 2-norm. Moreover, it is proposed to approximate a given neural network which is assumed to give a good model of the data. Theorem~\ref{thm} shows that with these two tweaks to the problem formulation, the trade-off between network robustness and accuracy can be fully convexified and combined within a single SDP- instead of the iterative projection type approach of ADMM. 

There are deep connections between control theory and neural network robustness analysis, most notably through the application of results from absolute stability theory. Early results in this area include~\cite{chu1999bounds}  and~\cite{chu1999stabilization} where the application to neural networks led to the development of a new set of static multipliers for nonlinearities with repeated terms. Similar results were obtained by~\cite{barabanov2002stability} for recurrent neural networks. More recent efforts in this direction include the works of~\cite{wang2022quantitative},~\cite{fazlyab2019efficient} and~\cite{pauli2021linear} where the connection to LMIs and SDP solvers has been made more explicit. One of the major issues when using these SDP formulations is the lack of algorithm scalability  to the large networks often seen in practice. Efforts to address this scalability issue are now under development, for example with~\cite{wang2024scalability} and ~\cite{newton2023sparse} where sparsity of the neural network was exploited. With the connections between neural networks maturing, there is now  a push towards moving away from solving analysis type problems of neural networks and going towards controller synthesis, as exemplified by {\TR papers} such as~\cite{furieri2022neural},~\cite{junnarkar2022synthesis} and~\cite{newton2022stability}.



\subsection*{Notation}
 Real matrices $M$ of size $m \times n$ are denoted $M \in \mathbb{R}^{n \times m}$. The set of positive-definite matrices $M \succ 0$ of size $n$ are  $M \in \mathbb{S}^n_{\succ 0}$. The set of diagonal matrices of dimension $n$ with positive elements are $\mathbb{D}^n_+$. The identity matrix of dimension $n$ is $I_n$.  
 The set of symmetric matrices $A$ of dimension $n$ are $A \in \mathbb{S}^n$. The Hermitian operation is defined such that $He(M) = M+M^\top$. When appropriate, element $(i,j)$ of a matrix $M$ is referred to as $M^{i,j}$. The $\star$-notation for symmetric matrices is used, as in $M = \begin{bmatrix} a & b \\ b  & c\end{bmatrix} = \begin{bmatrix} a & b \\ \star  & c\end{bmatrix}$. Real vectors $p$ of dimension $n$ are denoted $p\ \in \mathbb{R}^n$, non-negative vectors are $p \in \mathbb{R}^n_+$ and non-negative scalars are $p \in \mathbb{R}_+$. The vector of ones of dimension $n$ is $\mathbf{1}_n$ and the vector of zeros of dimension $n$ is ${0}_n$. The $m \times n$ matrix of zeros is $0_{m \times n}$ and that of ones is $1_{m \times n}.$ Nonlinear functions $\phi: \mathbb{R}^n~\to~\mathbb{R}^n$   act component-wise on their  arguments, \textit{i.e.} $\phi(s) = [\phi_1(s_1),\,\ldots,\phi_n(s_n)]^\top$ for $s = [s_1,\,\ldots,s_n]^\top\in \mathbb{R}^{n}$.
\section{Problem setup}
\subsection{The original neural network}
 Consider a neural network $f(u): \mathbb{R}^{n_u} \to \mathbb{R}^{n_{g}}$ in the implicit form of~\cite{el2021implicit}
\begin{subequations} \label{NN1} \begin{align}
x  & = \phi(W_xx + W_uu + b)\label{imp_well_pos}
 \\
 f(u) &= W_{f,x}x+W_{f,u}u+ b_f, 
 \end{align}\end{subequations}
with weights $W_x \in \mathbb{R}^{n \times n}$,  $W_u \in \mathbb{R}^{n \times n_u}$,  $W_{f,x} \in \mathbb{R}^{n_g \times n}$, $W_{f,u} \in \mathbb{R}^{n_g \times n_u}$ and biases  $b \in \mathbb{R}^{n }$,  $b_f \in \mathbb{R}^{n_g}$ and activation functions $\phi(\cdot)$. 

    The primary reasons for using the implicit form of~\cite{el2021implicit} to represent the networks of~\eqref{NN1} in this paper are simply: \textit{i}) it allows a wide class of network architectures (such as recurrent and feed-forward neural networks, but also many others as detailed in~\cite{el2021implicit}) to be represented within a unified framework, \textit{ii}) it makes the notation more compact. However, we expect that, after a minor tweaking of the notation, the presented results could be applied to a wider class of architectures besides those with the structure of~\eqref{NN1}. Finally, it is remarked that  the implicit neural networks of~\eqref{NN1} are structured similarly to others from the literature, most notably deep equilibrium neural networks~\cite{revay2023recurrent, bai2019deep, pokle2022deep}. 

In this work, the neural network of~\eqref{NN1} is regarded as the true mapping of the data, and it is around this mapping that the accuracy of the robustified network is defined. 
%
 Several standard assumptions are imposed on the neural network of~\eqref{NN1}. 
{\flushleft\textbf{Assumption 1}~
The neural network $f(u)$  of~\eqref{NN1} is assumed to be well-posed, in the sense that for every $u \in \mathbb{R}^{n_u}$ there is a unique solution $x$ to the implicit equation \eqref{imp_well_pos}.  }
{\flushleft\textbf{Assumption 2}~
The activation function $\phi(\cdot)$  is diagonal and $[0, 1]$-slope-restricted}. 

 \vspace{2ex}
Recall that a  function $\phi(\cdot): \mathbb{R}^n \to \mathbb{R}^n$ is called diagonal if $(\phi(s))_i = \phi_i(s_i)$ for all $s \in \mathbb{R}^n$ and $i = 1,2\dots, n$. A diagonal function satisfies a $[\delta_1,\delta_2]$-slope restriction, for  given $\delta_1 < \delta_2$, if
\begin{align}
\delta_1 \leq \frac{\phi_i(s_1)-\phi_i(s_2)}{s_1-s_2} \leq \delta_2 \quad  \forall \: s_1, \,s_2 \in \mathbb{R}, \; s_1 \neq s_2\,. \label{def:slop}
\end{align}
The slope restriction assumption for the nonlinearities of neural networks is widely used, e.g. in~\cite{chu1999bounds, barabanov2002stability, fazlyab2019efficient, drummond2022bounding}. {\TR The present $[0,1]$-slope restriction may be relaxed to a $[0,\delta]$-slope restriction for $\delta >0$ by invoking a loop-shifting argument.}


\subsection{The robustified neural network and robustness in the 1-norm}
The goal of this paper is to develop a method to compute a quantifiably robust neural network (in the sense of Definition~\ref{def:robustness} for some $\gamma$, $\gamma_{u,1}$, $\gamma_{u,2}$) which can approximate the input/output map of the original one given by~\eqref{NN1}. To this end, a second neural network is introduced
\begin{subequations} \label{NN2} \begin{align} 
z  & = \phi(\Psi_z z + \Psi_uu + \beta),
 \\
 g(u) &= \Psi_{g,z}z+\Psi_{g,u}u +\beta_g, 
 \end{align}\end{subequations}
with weights $\Psi_z \in \mathbb{R}^{n \times n}$,  $\Psi_u \in \mathbb{R}^{n \times n_u}$,  $\Psi_{g,z} \in \mathbb{R}^{n_g \times n}$,  $\Psi_{g,u} \in \mathbb{R}^{n_g \times n_u}$ and biases  $\beta \in \mathbb{R}^{n }$,  $\beta_g \in \mathbb{R}^{n_g}$. The goal is to compute values for these weights and biases whilst navigating the accuracy/robustness trade-off with respect to the original neural network~\eqref{NN1}. To facilitate the robustness analysis, the following notation will be used to characterise the outputs generated by two inputs $u_1$ and $u_2$ acting on this network. Specifically, we let $(z,u) = (z_i,u_i)$ for $i=1,2$ denote two input/state pairs of~\eqref{NN2}, and set $g_i : = g(u_i)$ as the corresponding outputs. 
%
The following set will be used to define the space of these two inputs. 
\begin{definition}\label{def:u}
For some $\varepsilon_{u,1} \geq 0$, $\varepsilon_{u,2} \geq 0$, two inputs $u_1$ and $u_2$ are constrained by the set $
\mathcal{U}(\varepsilon_1,\varepsilon_2)
: = \{ (u_1,u_2) \: :\:
\|u_1-u_2\|_2^2 \leq \varepsilon_{u,2},~ \|u_1-u_2\|_1 \leq \varepsilon_{u,1}
\}.$
\end{definition}
%
In this work, the following notion of neural network robustness will be used. Compared to earlier work on neural network robustness certificates using SDPs where the 2-norm of the outputs was bounded, such as~\cite{fazlyab2019efficient}, the 1-norm is used {\TR presently}.
\begin{definition}\label{def:robustness}
The neural network~\eqref{NN2} is said to be robust if there exists $\gamma \geq 0$, $\gamma_{u,1} \geq 0$, $\gamma_{u,2} \geq 0$ such that
 \begin{align}\label{robust}
 \|g(u_1)-g(u_2)\|_1 \leq \gamma + \gamma_{u,1}\|u_1-u_2\|_1 +  \gamma_{u,2}\|u_1-u_2\|_2^2 \quad  \forall\: (u_1,\,u_2) \in \mathcal{U}.
 \end{align}
\end{definition}
The following measure of the similarity between two neural networks (\cite{li2021robust}), as in their accuracy with respect to each other's input/output mappings, will be used.
\begin{definition}\label{def:accuracy}
The two neural networks~\eqref{NN1} and~\eqref{NN2} are said to be similar if they have the same biases, as in $\beta = b, \beta_f = b_f$, and their weights are ``close'' in the sense that there exists non-negative scalars  $\varepsilon_{W_x}, \varepsilon_{W_u}, \varepsilon_{W_{f,x}}$, and $ \varepsilon_{W_{f,u}}$ such that, for each matrix element $i,j$,
\begin{equation}\label{norm_wx}
\left\{\begin{aligned}
|W_x^{i,j}-\Psi_z^{i,j}| &\leq \varepsilon_{W_x}, &
\,
|W_u^{i,j}-\Psi_u^{i,j}|&\leq \varepsilon_{W_u}, \\
\,
|W_{f,x}^{i,j}-\Psi_{g,z}^{i,j}| &\leq \varepsilon_{W_{f,x}}, &
\,
|W_{f,u}^{i,j}-\Psi_{g,u}^{i,j}| &\leq \varepsilon_{W_{f,u}}.
\end{aligned}\right.
\end{equation}
If the above holds, then it said that
$   g(u) \approx f(u)$ for all $u \in \mathbb{R}^{n_u}.$
\end{definition}
   Whilst there are limitations with the above definition for network similarity (a more robust version such as that considered in~\cite{li2021robust} may provide improved theoretical nuance), it was observed that this form empirically performs well and, crucially, allows the problem to be convexified. 

\subsection{Problem statement}

Solutions to the following problem are the considered. 

{\flushleft\textbf{Problem 1}}~
 {\TR Determine } weights and biases of the neural network~\eqref{NN2} such that it is robust in the sense of Definition~\ref{def:robustness} 
 and is similar to the original neural network in the sense of Definition~\ref{def:accuracy}. 

\vspace{2ex}
The idea of this problem is to explore the trade-off between neural network accuracy and robustness. The main benefit of the {\TR present results are that they lead to} a convex synthesis problem. 

 

\section{Preliminary results}\label{sec:lemmas}

We present a solution to Problem 1 in Theorem~\ref{thm} below, the proof of which is supported by the preliminary material in this section. 
This result will rely upon the expression of the 1-norm of a vector as a sum of ReLU functions. Recall that
the ReLU function $r(\cdot):  \mathbb{R}^{n_\sigma} \to \mathbb{R}^{n_\sigma}$ is the diagonal function with equal components given by 
\begin{equation}\label{def:relu}
r(\sigma) = \text{ReLU}(\sigma) = \max \big\{0, \sigma \big\} \quad \forall \: \sigma \in \mathbb{R}\,.
\end{equation}
%
This nonlinearity is $[0,1]$-slope-restricted {\TR and} also satisfies a range of other quadratic constraints, as detailed in~\cite{richardson2023strengthened} for example.
The key idea behind the robust neural network synthesis of Theorem~\ref{thm} is the following representation of the absolute value as ReLU functions: 
\begin{align}\label{1-norm}
|\sigma| =\text{ReLU}(\sigma)+ \text{ReLU}(-\sigma)\quad \forall \: \sigma\in \mathbb{R}.
\end{align}

This representation allows the 1-norm of the robustified neural network's output to be expressed as a \textit{linear} sum of [0,1]-slope-restricted nonlinear functions acting on the output. With this linearity, the problem can be convexified. By contrast, using the 2-norm of the Lipschitz bounds introduces a bi-linearity into the matrix inequalities, as also encountered in~\cite{drummond2022reduced}, and so only gives locally optimal results. 



With $r(s) = \text{ReLU}(s)$ as in~\eqref{def:relu}, define the incremental variables
\begin{equation}
\left\{\begin{aligned}
\begin{bmatrix} \tilde{g} \\ \tilde{u} \\ \tilde{z} \end{bmatrix}
&= \begin{bmatrix} g_2-g_1 \\ u_2-u_1 \\ z_2-z_1 \end{bmatrix}
= \begin{bmatrix}\Psi_{g,z}\tilde{z} + \Psi_{g,u}\tilde{u} \\ u_2-u_1 \\ z_2-z_1 \end{bmatrix}
,~
\tilde{g}_\pm = \begin{bmatrix} r( \tilde{g}) \\ r(-\tilde{g}) \end{bmatrix},\\
~
\tilde{u}_\pm  &=  \begin{bmatrix} r(\tilde{u}) \\ r(-\tilde{u}) \end{bmatrix},
\quad \text{and} \quad
p(u_1,u_2)  = \begin{bmatrix} {\tilde{g}_\pm}^\top & {\tilde{u}_\pm}^\top & \tilde{z}^\top & \tilde{u}^\top & 1\end{bmatrix}^\top.
\end{aligned}\right.
\end{equation}
Recall that the inputs $u_1$, $u_2$ and outputs $g_1 = g(u_1)$, $g_2 = g(u_2)$ are given by the neural network~\eqref{NN2}. With these variables, the following lemmas define incremental quadratic constraints for~\eqref{NN2}.

\begin{lemma}\label{lem:sz}
For all $\mathbb{T}_z \in \mathbb{D}_+^n$, {\TR it follows that} 
\begin{align*}
s_z (u_1,u_2) \, {\TR :=} \, p(u_1,u_2) ^\top \Omega_z(\mathbb{T}_z ) p(u_1,u_2)  \geq 0 \quad \forall \: (u_1,\,u_2) \in \mathcal{U}\,,
\end{align*}
with the matrix $\Omega_z(\mathbb{T}_z )$  defined in equation~\eqref{app:lem10} in the Appendix. 
\end{lemma}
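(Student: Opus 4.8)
The plan is to obtain Lemma~\ref{lem:sz} directly from the $[0,1]$-slope restriction of the activation $\phi$ (Assumption 2), applied incrementally between the two state/input pairs $(z_i,u_i)$, $i=1,2$, of the network~\eqref{NN2}. First I would write out the incremental relation for the hidden state. Introducing $\tilde{v} := \Psi_z\tilde{z} + \Psi_u\tilde{u}$ for the increment in the pre-activation argument, the two copies of the implicit equation in~\eqref{NN2} give $\tilde{z} = \phi(\Psi_z z_2 + \Psi_u u_2 + \beta) - \phi(\Psi_z z_1 + \Psi_u u_1 + \beta)$, and since the bias $\beta$ cancels in the difference, the increment of the argument of $\phi$ is exactly $\tilde{v}$.

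The next step is to convert the scalar slope bound~\eqref{def:slop} (with $\delta_1 = 0$, $\delta_2 = 1$) into a componentwise quadratic inequality. For each coordinate $i$, the condition $0 \leq \tilde{z}_i/\tilde{v}_i \leq 1$ is equivalent to $\tilde{z}_i(\tilde{v}_i - \tilde{z}_i) \geq 0$, which remains valid when $\tilde{v}_i = 0$ since then $\tilde{z}_i = 0$ as well and both factors vanish. Weighting each such inequality by the corresponding positive diagonal entry of $\mathbb{T}_z \in \mathbb{D}_+^n$ and summing over $i$ yields the aggregated constraint
\[
\tilde{z}^\top \mathbb{T}_z (\Psi_z\tilde{z} + \Psi_u\tilde{u} - \tilde{z}) \geq 0.
\]
This holds for every pair $u_1,u_2$, and in particular on $\mathcal{U}$; the restriction to $\mathcal{U}$ is not actually required for this lemma, but is harmless.

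Finally I would recognise the left-hand side as a quadratic form in the augmented vector $p(u_1,u_2)$. Since $\tilde{z}$ and $\tilde{u}$ both appear as sub-blocks of $p$, after symmetrising with the Hermitian operator $He(\cdot)$ the expression $\tilde{z}^\top \mathbb{T}_z(\Psi_z\tilde{z} + \Psi_u\tilde{u} - \tilde{z})$ can be written as $p^\top \Omega_z(\mathbb{T}_z) p$, where $\Omega_z(\mathbb{T}_z)$ carries the slope-condition terms built from $\mathbb{T}_z\Psi_z$, $\mathbb{T}_z\Psi_u$ and $-\mathbb{T}_z$ in the $\tilde{z}$ and $\tilde{u}$ blocks and zeros in the blocks corresponding to $\tilde{g}_\pm$, $\tilde{u}_\pm$ and the constant entry; this is precisely the matrix recorded in~\eqref{app:lem10}. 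I expect the only delicate point to be the bookkeeping: verifying that the coefficient pattern of $He(\cdot)$ lands in the correct positions relative to the ordering $[\tilde{g}_\pm;\,\tilde{u}_\pm;\,\tilde{z};\,\tilde{u};\,1]$ used to assemble $p$, so that $p^\top \Omega_z(\mathbb{T}_z) p$ reproduces the aggregated slope inequality exactly. Once this matching is confirmed, $s_z(u_1,u_2) = p^\top \Omega_z(\mathbb{T}_z) p \geq 0$ on $\mathcal{U}$ follows at once.
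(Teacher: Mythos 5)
Your proposal is correct and follows essentially the same route as the paper's proof, which simply expands $s_z(u_1,u_2)$ as $\tilde{z}^\top \mathbb{T}_z(\Psi_z\tilde{z}+\Psi_u\tilde{u}-\tilde{z}) \geq 0$ and cites the $[0,1]$-slope restriction; your version fills in the details the paper leaves implicit (bias cancellation, the componentwise equivalence $\tilde{z}_i(\tilde{v}_i-\tilde{z}_i)\geq 0$, and the diagonal weighting by $\mathbb{T}_z$). Your observation that the restriction to $\mathcal{U}$ is not needed for this lemma is also accurate.
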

\begin{proof}
The quadratic constraint of $s_z (u_1,u_2) $ can be expanded out as
$
\tilde{z}^\top \mathbb{T}_z(\Psi_z\tilde{z}+\Psi_u\tilde{u}-\tilde{z}) \geq 0
$
whose non-negativity follows from the assumed $[0,1]$-slope-restriction of the activation function. 
\end{proof}
\begin{lemma}\label{lem:su}
For all $\mathbb{T}_{u,1} \in \mathbb{R}_+$ and $\mathbb{T}_{u,2} \in \mathbb{R}_+$, {\TR it follows that}
\begin{align*}
s_u(u_1,u_2) \, {\TR :=} \, p(u_1,u_2)^\top {\TR \Omega_u(\mathbb{T}_{u,1},\mathbb{T}_{u,2})} 
p(u_1,u_2) \geq 0 \quad \forall \: (u_1,\,u_2) \in \mathcal{U}\,,
\end{align*}
with the matrices $\Omega_u(\mathbb{T}_{u,1},\mathbb{T}_{u,2})$ defined in equation~\eqref{app:lem11}  in the Appendix. 
\end{lemma}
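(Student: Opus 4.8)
The plan is to mirror the proof of Lemma~\ref{lem:sz}, but with the slope restriction replaced by the two inequalities that define membership in $\mathcal{U}$, combined through an S-procedure with the non-negative multipliers $\mathbb{T}_{u,1}$ and $\mathbb{T}_{u,2}$. Writing $\tilde{u} = u_2 - u_1$, the constraints defining $\mathcal{U}$ are $\|\tilde{u}\|_1 \leq \varepsilon_{u,1}$ and $\|\tilde{u}\|_2^2 \leq \varepsilon_{u,2}$, equivalently $\varepsilon_{u,1} - \|\tilde{u}\|_1 \geq 0$ and $\varepsilon_{u,2} - \tilde{u}^\top\tilde{u} \geq 0$. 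Both quantities are non-negative on $\mathcal{U}$, so any non-negative combination of them is again non-negative, and it is this combination that I would identify with $s_u$.

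First I would express the $1$-norm term as a linear function of the lifted variable $\tilde{u}_\pm$. Applying the absolute-value identity~\eqref{1-norm} component-wise yields $\|\tilde{u}\|_1 = \mathbf{1}_{n_u}^\top r(\tilde{u}) + \mathbf{1}_{n_u}^\top r(-\tilde{u}) = \mathbf{1}_{2n_u}^\top \tilde{u}_\pm$, so $\|\tilde{u}\|_1$ is linear in the entries of $p(u_1,u_2)$, while $\|\tilde{u}\|_2^2 = \tilde{u}^\top\tilde{u}$ is already quadratic in the $\tilde{u}$-block of $p$. Setting $s_u := \mathbb{T}_{u,1}\big(\varepsilon_{u,1} - \mathbf{1}_{2n_u}^\top\tilde{u}_\pm\big) + \mathbb{T}_{u,2}\big(\varepsilon_{u,2} - \tilde{u}^\top\tilde{u}\big)$ then gives a quantity that is $\geq 0$ on $\mathcal{U}$ for every $\mathbb{T}_{u,1}, \mathbb{T}_{u,2} \geq 0$.

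The remaining task is purely to collect coefficients into a symmetric matrix $\Omega_u$ acting on $p$ so that $s_u = p^\top \Omega_u p$: the constant $\mathbb{T}_{u,1}\varepsilon_{u,1} + \mathbb{T}_{u,2}\varepsilon_{u,2}$ occupies the entry paired with the trailing unit component of $p$, the linear term $-\mathbb{T}_{u,1}\mathbf{1}_{2n_u}^\top\tilde{u}_\pm$ is realised by the off-diagonal block coupling $\tilde{u}_\pm$ to that unit component, and $-\mathbb{T}_{u,2}\tilde{u}^\top\tilde{u}$ is realised by $-\mathbb{T}_{u,2}I_{n_u}$ in the $\tilde{u}$-diagonal block, all other blocks being zero. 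I do not expect a genuine analytical obstacle; the only care required is the bookkeeping of the factor-of-two that the $\star$-notation hides when the linear term is split symmetrically across the off-diagonal blocks, and checking that the assembled $\Omega_u$ agrees with the displayed definition in~\eqref{app:lem11}. Non-negativity of $s_u$ on $\mathcal{U}$ is then immediate from the signs of the multipliers and of each bracketed term.
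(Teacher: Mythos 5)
Your overall strategy is the paper's: an S-procedure combining the two constraints defining $\mathcal{U}$ with the non-negative multipliers $\mathbb{T}_{u,1}$, $\mathbb{T}_{u,2}$, with the 1-norm linearised through \eqref{1-norm} as $\|\tilde{u}\|_1 = \mathbf{1}_{2n_u}^\top \tilde{u}_\pm$. The quantity you define, $\mathbb{T}_{u,1}\bigl(\varepsilon_{u,1} - \mathbf{1}_{2n_u}^\top\tilde{u}_\pm\bigr) + \mathbb{T}_{u,2}\bigl(\varepsilon_{u,2} - \tilde{u}^\top\tilde{u}\bigr)$, is indeed non-negative on $\mathcal{U}$. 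However, your final step fails as stated: the matrix your recipe assembles has a zero diagonal block in the $\tilde{u}_\pm$ position and the full penalty $-\mathbb{T}_{u,2}I_{n_u}$ in the $\tilde{u}$ block, whereas the $\Omega_u$ actually displayed in \eqref{app:lem11} splits this penalty, carrying $-\tfrac{1}{2}\mathbb{T}_{u,2}I_{2n_u}$ on the lifted block $\tilde{u}_\pm$ and $-\tfrac{1}{2}\mathbb{T}_{u,2}I_{n_u}$ on the $\tilde{u}$ block. So the "check that the assembled $\Omega_u$ agrees with the displayed definition" would reveal a discrepancy rather than confirm agreement, and your argument as written does not establish non-negativity of $p^\top \Omega_u p$ for the matrix the lemma actually names.

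The missing ingredient is the pointwise identity $r(\sigma)^2 + r(-\sigma)^2 = \sigma^2$ (for each component at least one of $r(\sigma_i)$, $r(-\sigma_i)$ vanishes and the other equals $|\sigma_i|$), which on the structured vectors $p(u_1,u_2)$ gives $r(\tilde{u})^\top r(\tilde{u}) + r(-\tilde{u})^\top r(-\tilde{u}) = \|\tilde{u}\|_2^2$. This is precisely the step in the paper's proof: expanding the displayed $\Omega_u$ yields $\mathbb{T}_{u,1}\varepsilon_{u,1} + \mathbb{T}_{u,2}\varepsilon_{u,2} - \tfrac{1}{2}\mathbb{T}_{u,2}\|\tilde{u}\|_2^2 - \tfrac{1}{2}\mathbb{T}_{u,2}\bigl(\|r(\tilde{u})\|_2^2 + \|r(-\tilde{u})\|_2^2\bigr) - \mathbb{T}_{u,1}\|\tilde{u}\|_1$, and the identity collapses the two half-penalties into $-\mathbb{T}_{u,2}\|\tilde{u}\|_2^2$, recovering exactly your expression. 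The repair is therefore short, but the identity must be invoked; moreover, the split is not cosmetic bookkeeping that you may normalise away: your matrix and the paper's differ as matrices (agreeing only on the image of $(u_1,u_2)\mapsto p(u_1,u_2)$), and the negative-definite $\tilde{u}_\pm$ diagonal block supplied by the paper's $\Omega_u$ matters for strict feasibility of the LMI \eqref{mat_ineq} in Theorem~\ref{thm}, e.g.\ when $\gamma_{u,2}=0$ and no other summand contributes negativity to that block.
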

\begin{proof}
Using the decomposition of the 1-norm in terms of the ReLU of~\eqref{1-norm}, then the quadratic constraint of $s_u(u_1,u_2)$ can be expanded out as
\begin{align*}
p(u_1,u_2) ^\top\Omega_u(\mathbb{T}_{u,1},\mathbb{T}_{u,2})p(u_1,u_2)   & = \mathbb{T}_{u,1} \varepsilon_{u,1} + \mathbb{T}_{u,2} \varepsilon_{u,2} - \frac{1}{2}\mathbb{T}_{u,2} \|\tilde{u}\|_2^2
\\ &\quad    - \frac{1}{2}\mathbb{T}_{u,2} (r(\tilde{u}) ^2+ r(-\tilde{u})^2)  - \mathbb{T}_{u,1} (r(\tilde{u}) + r(-\tilde{u}))  , \nonumber
\\
 & =  \mathbb{T}_{u,1} \varepsilon_{u,1} + \mathbb{T}_{u,2} \varepsilon_{u,2}
 - \mathbb{T}_{u,2} \|\tilde{u}\|_2^2  - \mathbb{T}_{u,1} \|\tilde{u}\|_1  
\geq 0,
\end{align*}
since $(u_1,\,u_2) \in \mathcal{U}$ following Definition~\ref{def:u}.
\end{proof}

\begin{lemma}\label{lem:sg}
For all $\mathbb{T}_g \in \mathbb{D}_+^{n_g}$, {\TR it follows that}
\begin{align*}
s_g(u_1,u_2) \, {\TR :=} \, p(u_1,u_2) ^\top \Omega_g(\mathbb{T}_g ) p(u_1,u_2)  \geq 0 \quad \forall \: (u_1,\,u_2) \in \mathcal{U}\,,
\end{align*}
with the matrix $\Omega_g(\mathbb{T}_{g})$ defined in equation~\eqref{app:lem12} in the Appendix. 
\end{lemma}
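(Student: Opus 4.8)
The plan is to mirror the argument already used for Lemma~\ref{lem:sz}, exploiting the fact that the ReLU is a sector-$[0,1]$ nonlinearity that vanishes at the origin. The quantity $s_g$ encodes the slope/sector condition for the ReLU applied to the output increment $\tilde{g}$ and to its negation $-\tilde{g}$. Since $r(0)=0$, the $[0,1]$-slope restriction \eqref{def:slop} of the ReLU, taken between the point $\tilde{g}_i$ and the origin, gives $0 \le r(\tilde{g}_i)/\tilde{g}_i \le 1$ and hence the scalar sector inequality $r(\tilde{g}_i)(\tilde{g}_i - r(\tilde{g}_i)) \ge 0$ for each component $i$; applying the same bound at $-\tilde{g}_i$ yields $r(-\tilde{g}_i)(-\tilde{g}_i - r(-\tilde{g}_i)) \ge 0$.

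First I would substitute the output relation $\tilde{g} = \Psi_{g,z}\tilde{z} + \Psi_{g,u}\tilde{u}$ into these scalar inequalities, so that each $\tilde{g}_i$ is carried by the $\tilde{z}$- and $\tilde{u}$-components of $p(u_1,u_2)$, while $r(\tilde{g})$ and $r(-\tilde{g})$ are exactly the blocks of $\tilde{g}_\pm$ already present in $p$. Then I would form the $\mathbb{T}_g$-weighted sum over $i$,
\begin{align*}
& \sum_{i=1}^{n_g} \mathbb{T}_g^{i,i}\Big[\, r(\tilde{g}_i)\big(\tilde{g}_i - r(\tilde{g}_i)\big) + r(-\tilde{g}_i)\big(-\tilde{g}_i - r(-\tilde{g}_i)\big)\Big] \\
& \qquad = r(\tilde{g})^\top \mathbb{T}_g\big(\tilde{g} - r(\tilde{g})\big) - r(-\tilde{g})^\top \mathbb{T}_g\big(\tilde{g} + r(-\tilde{g})\big),
\end{align*}
where diagonality of $\mathbb{T}_g$ has been used, and identify the resulting expression (after substituting for $\tilde{g}$) with $p(u_1,u_2)^\top \Omega_g(\mathbb{T}_g) p(u_1,u_2)$; this is precisely the defining role of the matrix $\Omega_g(\mathbb{T}_g)$ in equation~\eqref{app:lem12}. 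Because $\mathbb{T}_g \in \mathbb{D}_+^{n_g}$ is diagonal with positive entries and each bracketed scalar term is non-negative by the slope restriction, the weighted sum is non-negative, giving $s_g(u_1,u_2) \ge 0$.

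The step I expect to require the most care is the bookkeeping that confirms the quadratic form assembled from these sector inequalities coincides exactly with the block matrix $\Omega_g(\mathbb{T}_g)$ stated in the Appendix: in particular, placing the cross terms $r(\pm\tilde{g})^\top \mathbb{T}_g(\Psi_{g,z}\tilde{z} + \Psi_{g,u}\tilde{u})$ into the correct off-diagonal blocks and the quadratic penalties $-r(\pm\tilde{g})^\top \mathbb{T}_g\, r(\pm\tilde{g})$ into the $\tilde{g}_\pm$ block. By contrast with Lemma~\ref{lem:su}, no property of the set $\mathcal{U}$ is needed here: the sector bound for the ReLU holds globally in $\tilde{g}$, so the restriction to $(u_1,u_2) \in \mathcal{U}$ is satisfied automatically.
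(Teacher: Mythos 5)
Your proof is correct and takes essentially the same approach as the paper, which likewise expands $s_g(u_1,u_2)$ into the $\mathbb{T}_g$-weighted sum $r(\tilde{g})^\top \mathbb{T}_g(\tilde{g}-r(\tilde{g})) + r(-\tilde{g})^\top \mathbb{T}_g(-\tilde{g}-r(-\tilde{g})) \geq 0$ and invokes the $[0,1]$ slope restriction of the ReLU. Your extra componentwise detail and the remark that no property of $\mathcal{U}$ is actually needed are accurate elaborations of what the paper leaves implicit.
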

\begin{proof}
Similarly to Lemma~\ref{lem:sz}, the quadratic constraint of $s_g(u_1,u_2)$ can be expanded out as \newline
$
r(\tilde{g})^\top~\mathbb{T}_g(\tilde{g}~-~r(\tilde{g}))~+~r(-\tilde{g})^\top~\mathbb{T}_g(-\tilde{g}-r(-\tilde{g}))~\geq~0
$
which is again non-negative {\TR owing to} the [0,1] slope restriction of the ReLU$(\cdot)$ function. 
\end{proof}

\section{Main {\TR result}}

We state the main result of the present work, a theorem containing a SDP to solve Problem 1.

\begin{theorem}\label{thm}
Set the weights $w_0, w_1, w_2 \in \mathbb{R}_{+}$ and the tolerances $\varepsilon_{W_x}, \varepsilon_{W_u}, \varepsilon_{W_{f,x}}, \varepsilon_{W_{f,u}} \in \mathbb{R}_{+}$.  If there exists some $\mathbb{T}_z \in \mathbb{D}^n_{+}$,  $\mathbb{T}_g \in \mathbb{D}^{n_g}_{+}$, $\mathbb{T}_{u_1} \in \mathbb{R}_{+}$,  $\mathbb{T}_{u_2} \in \mathbb{R}_{+}$, $\mathbb{Y}_z \in \mathbb{R}^{n \times n}$, $\mathbb{Y}_u \in \mathbb{R}^{n \times n_u}$, $\mathbb{Y}_{g,z} \in \mathbb{R}^{n_g \times n}$, $\mathbb{Y}_{g,u} \in \mathbb{R}^{n_g \times n_u}$, $\gamma \geq 0$, $\gamma_{u,1} \geq 0$ and $\gamma_{u,2} \geq 0$ that solves
\begin{subequations}\begin{align}
&\min w_0 \gamma+w_1 \gamma_{u_1} + w_2 \gamma_{u,2}, \\
\text{subject to: } & 
\check{\Omega}_z(\cdot) + \check{\Omega}_g(\cdot)+ \Omega_u(\cdot) + \Omega_\gamma(\cdot)  \prec 0, \label{mat_ineq}
\end{align}\end{subequations}
with the matrices $\check{\Omega}_z(\mathbb{T}_z , \mathbb{Y}_z, \mathbb{Y}_u  )$, $ \check{\Omega}_g(\mathbb{T}_{g},\mathbb{Y}_{g,z},\mathbb{Y}_{g,u})$, and $\Omega_\gamma(\gamma, \gamma_{u,1}, \gamma_{u,2})$ defined in~\eqref{eq:thm_matrix_1}--\eqref{eq:thm_matrix_3}, 
and the following \textbf{element-wise} matrix inequalities hold
\begin{subequations}\begin{align}
 - \varepsilon_{W_x} \mathbb{T}_z 1_{n \times n}  & \leq \mathbb{Y}_z-\mathbb{T}_zW \leq \varepsilon_{W_x} \mathbb{T}_z1_{n \times n} , \label{eq:LMI_1}
\\
 - \varepsilon_{W_u} \mathbb{T}_z1_{n \times n_u}  &\leq \mathbb{Y}_u-\mathbb{T}_zW_u \leq \varepsilon_{W_u}  \mathbb{T}_z 1_{n \times n_u} ,
\label{eq:LMI_2} \\
 -\varepsilon_{W_{f,x}} \mathbb{T}_g 1_{n_g \times n} & \leq \mathbb{Y}_{g,z}-\mathbb{T}_gW_{f,x} \leq \varepsilon_{W_{f,x}}  \mathbb{T}_g  1_{n_g \times n} ,
\label{eq:LMI_3} \\
  -\varepsilon_{W_{f,u}} \mathbb{T}_g  1_{n_g \times n_u} &\leq \mathbb{Y}_{g,u}-\mathbb{T}_gW_{f,u} \leq \varepsilon_{W_{f,u}}  \mathbb{T}_g 1_{n_g \times n_u} , \label{eq:LMI_4}
\end{align}\end{subequations}
then, the neural network of $g(u)$ with weights and biases defined by
\begin{align}\label{g_weights}
\Psi_z = {\mathbb{T}_z}^{-1}\mathbb{Y}_z,
\Psi_u = {\mathbb{T}_z}^{-1}\mathbb{Y}_u,
\Psi_{g,z} = {\mathbb{T}_g}^{-1}\mathbb{Y}_{g,z},
\Psi_{g,u} = {\mathbb{T}_g}^{-1}\mathbb{Y}_{g,u},
\beta = b, \beta_f = b_f,
\end{align}
is robust in the sense of Definition~\ref{def:robustness} and similar to $f(u)$ from \eqref{NN1} in the sense of Definition~\ref{def:accuracy}.

\end{theorem}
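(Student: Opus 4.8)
The plan is to establish the two claimed properties — robustness and similarity — separately, since they arise from conceptually distinct parts of the SDP. The similarity claim is essentially immediate from the construction, so I would dispatch it first, and then devote the bulk of the argument to showing that the matrix inequality~\eqref{mat_ineq} certifies robustness in the sense of Definition~\ref{def:robustness}.

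\textbf{Similarity.} First I would verify Definition~\ref{def:accuracy} directly. The biases are set equal by construction in~\eqref{g_weights}, so $\beta = b$ and $\beta_f = b_f$ hold trivially. For the weight bounds, observe that~\eqref{g_weights} gives $\mathbb{Y}_z = \mathbb{T}_z \Psi_z$, so the element-wise inequality~\eqref{eq:LMI_1} reads $-\varepsilon_{W_x}\mathbb{T}_z 1_{n\times n} \leq \mathbb{T}_z(\Psi_z - W_x) \leq \varepsilon_{W_x}\mathbb{T}_z 1_{n\times n}$. Since $\mathbb{T}_z \in \mathbb{D}^n_+$ is diagonal with strictly positive entries, I would divide each row $i$ by the positive scalar $\mathbb{T}_z^{i,i}$ to cancel it, yielding $|W_x^{i,j} - \Psi_z^{i,j}| \leq \varepsilon_{W_x}$ for every $(i,j)$. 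The same row-scaling argument applied to~\eqref{eq:LMI_2}--\eqref{eq:LMI_4} recovers the remaining three bounds in~\eqref{norm_wx}, using that $\mathbb{T}_g \in \mathbb{D}^{n_g}_+$ is likewise positive-diagonal. This completes the similarity claim.

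\textbf{Robustness.} The core of the proof is an S-procedure / quadratic-constraint aggregation argument built on the preliminary lemmas. The plan is to show that the target inequality~\eqref{robust} is equivalent to non-negativity of a quadratic form $p(u_1,u_2)^\top \Omega_\gamma(\cdot) p(u_1,u_2)$, where $\Omega_\gamma$ encodes $\gamma + \gamma_{u,1}\|\tilde u\|_1 + \gamma_{u,2}\|\tilde u\|_2^2 - \|\tilde g\|_1$; here the ReLU representation~\eqref{1-norm} is what lets me write $\|\tilde g\|_1 = \mathbf{1}^\top \tilde g_\pm$ and $\|\tilde u\|_1 = \mathbf{1}^\top \tilde u_\pm$ as \emph{linear} functions of the entries of $p$, so these norms appear affinely rather than quadratically. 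Then I would form the sum of the three valid quadratic constraints from Lemmas~\ref{lem:sz}, \ref{lem:su}, \ref{lem:sg} and observe that $\check\Omega_z = \Omega_z$, $\check\Omega_g = \Omega_g$ after the change of variables $\mathbb{Y}_z = \mathbb{T}_z\Psi_z$, $\mathbb{Y}_u = \mathbb{T}_z\Psi_u$, $\mathbb{Y}_{g,z} = \mathbb{T}_g\Psi_{g,z}$, $\mathbb{Y}_{g,u} = \mathbb{T}_g\Psi_{g,u}$ has converted the bilinear products of multipliers and weights into the \emph{linear} decision variables $\mathbb{Y}_\bullet$. The logic is then the standard S-procedure chain: for any $(u_1,u_2)\in\mathcal U$, the vector $p = p(u_1,u_2)$ satisfies $p^\top(\check\Omega_z + \check\Omega_g + \Omega_u)p \geq 0$ by the three lemmas (each summand is non-negative), while the SDP constraint~\eqref{mat_ineq} gives $p^\top(\check\Omega_z+\check\Omega_g+\Omega_u+\Omega_\gamma)p \leq 0$ (strictly negative for $p\neq 0$, but $\leq 0$ suffices). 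Subtracting yields $p^\top\Omega_\gamma(\cdot)p \leq 0$ for every admissible $(u_1,u_2)$, which is exactly~\eqref{robust} rearranged; since $\gamma,\gamma_{u,1},\gamma_{u,2}\geq 0$, the network is robust in the sense of Definition~\ref{def:robustness}.

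\textbf{Main obstacle.} The delicate point is bookkeeping rather than deep mathematics: I must confirm that the explicit matrices $\Omega_z, \Omega_u, \Omega_g, \Omega_\gamma$ (defined in the Appendix and in~\eqref{eq:thm_matrix_1}--\eqref{eq:thm_matrix_3}) are all written with respect to the \emph{same} ordering of the composite vector $p(u_1,u_2) = [\tilde g_\pm^\top\ \tilde u_\pm^\top\ \tilde z^\top\ \tilde u^\top\ 1]^\top$, so that their sum is meaningful and the lemma-by-lemma non-negativity genuinely transfers to the aggregated form. I would also need to check carefully that the constant ``$1$'' entry of $p$ correctly produces the affine terms $\gamma$ and $\mathbb{T}_{u,1}\varepsilon_{u,1} + \mathbb{T}_{u,2}\varepsilon_{u,2}$ in the right-hand corner of the respective matrices, and that the substitution $\mathbb{Y}_\bullet = \mathbb{T}_\bullet \Psi_\bullet$ is consistently invertible — which it is, precisely because $\mathbb{T}_z, \mathbb{T}_g$ are constrained to the \emph{strictly} positive diagonal cones, guaranteeing $\mathbb{T}_z^{-1}, \mathbb{T}_g^{-1}$ exist and that~\eqref{g_weights} is well-defined.
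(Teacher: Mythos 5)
Your proposal is correct and takes essentially the same route as the paper's own proof: robustness via left/right multiplication of the LMI~\eqref{mat_ineq} by $p(u_1,u_2)$ together with the non-negativity of $s_z$, $s_u$, $s_g$ from Lemmas~\ref{lem:sz}--\ref{lem:sg} (your sum-and-subtract phrasing is just a rearrangement of the paper's combined inequality, and your observation that $\check{\Omega}_z$, $\check{\Omega}_g$ coincide with $\Omega_z$, $\Omega_g$ under $\mathbb{Y}_\bullet = \mathbb{T}_\bullet \Psi_\bullet$ is exactly the paper's ``factorisation of the weights in~\eqref{g_weights}''), and similarity by dividing the element-wise constraints~\eqref{eq:LMI_1}--\eqref{eq:LMI_4} through by the strictly positive diagonal entries of $\mathbb{T}_z$ and $\mathbb{T}_g$. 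No gaps.
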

\begin{proof}
Multiplying the matrix inequality of~\eqref{mat_ineq} on the right by $p(u_1,u_2)$ and on the left by its transpose implies, with the factorisation of the weights in~\eqref{g_weights}, that
\begin{align}
 \|g(u_1)-g(u_2)\|_1 - \gamma - \gamma_{u,1}\|u_1-u_2\|_1 -&  \gamma_{u,2}\|u_1-u_2\|_2^2+s_z(u_1,u_2) \notag \\
 &+s_g(u_1,u_2)+s_u(u_1,u_2) \leq 0\,. \label{g1g2}
\end{align}
Using Lemmas~\ref{lem:sz},~\ref{lem:su} \&~\ref{lem:sg}, then the following quadratic inequalities are non-negative $s_z(u_1,u_2)$, $s_g(u_1,u_2)$, $ s_u(u_1,u_2) \geq 0$ for all $(u_1, u_2) \in \mathcal{U}$. With these conditions, then~\eqref{g1g2} implies that the robustness bound of~\eqref{robust} holds.
To show the norm conditions of~\eqref{norm_wx}, {\TR consider the first set of element-wise inequalities~\eqref{eq:LMI_1}} 
\begin{align*}
- \varepsilon_{W_x} \mathbb{T}_z^{i,i} \leq \mathbb{Y}_z^{i,j}-\mathbb{T}_z^{i,i}W^{i,j} \leq \varepsilon_{W_x} \mathbb{T}_z^{i,i}, \quad \forall \: i,j \in 1, 2, \dots, n
\end{align*}
with superscripts ${i,j}$ denoting matrix element ($i,j$). Since $\mathbb{T}_z \in \mathbb{D}_+^n$, then $\mathbb{T}_z^{i,i} >0$, and so
\begin{align*}
- \varepsilon_{W_x} \leq \Psi^{i,j}-W^{i,j} \leq \varepsilon_{W_x} ,~ \forall i,j \in 1, 2, \dots, n
\end{align*}
giving the norm condition of~\eqref{norm_wx}. The remaining norm bounds of~\eqref{norm_wx} can  be obtained in a similar manner {\TR from~\eqref{eq:LMI_2}--\eqref{eq:LMI_4}}. 
\end{proof}

{\TR Observe from}~\eqref{g_weights} that the weights and biases of the robustified neural network, $\Psi_z $,  $\Psi_u $,  $\Psi_{g,z}  $,  $\Psi_{g,u} $ and biases  $\beta$,  $\beta_g $, can be extracted directly from the matrix variables of the problem:  $\mathbb{T}_z$,  $\mathbb{T}_g $, $\mathbb{T}_{u_1} $,  $\mathbb{T}_{u_2}$, $\mathbb{Y}_z $, $\mathbb{Y}_u $, $\mathbb{Y}_{g,z} $, $\mathbb{Y}_{g,u} $, {\TR $\gamma$}, $\gamma_{u,1} $ and $\gamma_{u,2} $. The conditions of the theorem are therefore linear in the decision variables, and so can be solved as a single SDP. 




\section{Numerical example: Robustifying model predictive control}

To demonstrate the {\TR utility} of Theorem~\ref{thm}  for a control-orientated application, consider the problem of approximating a model predictive control (MPC)  policy to increase its robustness.  {\TR The following example is based on that from~\cite{drummond2022bounding} \footnote{Code to generate these results can be found at: \texttt{https://github.com/r-drummond/convex-nn-synthesis-1norm}.}. 

Consider {\TR the discrete-time, time-invariant, linear control} system
\begin{equation*}
    \begin{bmatrix}
           w_1[k+1]\\  w_2[k+1]
        \end{bmatrix}=
        \begin{bmatrix}
            4/3 & -2/3\\
            1 & 0
        \end{bmatrix}
     \begin{bmatrix}
           w_1[k]\\  w_2[k]
        \end{bmatrix}+
        \begin{bmatrix}
            0 \\ 1
        \end{bmatrix}
    v[k],
\end{equation*}
with state $w[k] \in \mathbb{R}^2$ and {\TR control action $v[k] \in \mathbb{R}$. Define the collection of future states and inputs as $\mathbf{w} = [w[k]^\top,\, w[k+1]^\top \dots,\, w[k+N]^\top ]$ and $\mathbf{v} = [v[k]^\top,\, v[k+1]^\top \dots,\, v[k+N]^\top ]$. The control} action is to be obtained by solving the following quadratic {\TR program:}
\begin{align}\label{QP1}
\text{{\TR minimize:}} &  \quad J(\mathbf{w},\mathbf{v}) =  w[k+N]^{\top}Pw[k+N]\\
& \qquad +\sum_{i=1}^{N-1} w[k+i]^{\top}  Q_iw[k+i]
    +v[k+i-1]^{\top} R_iv[k+i-1], \nonumber
    \\   \text{subject to:} &  \quad G\mathbf{v}  \leq S_vw[k]+c, \notag 
\end{align}
where $Q_i\in \mathbb{S}^{2}_{\succ 0}$, $P\in \mathbb{S}^{2}_{\succ 0}$, $R_i > 0$ $\forall i  = 1,\ldots,  N -1$, and the constraint set is $\mathbf{v} \in \mathcal{V}$. This quadratic program can be more compactly expressed as 
\begin{align*}
\underset{\mathbf{v}}{\mbox{minimize}}~& \mathbf{v}^{\top} H \mathbf{v} + 2w[k]^{\top}F^{\top}\mathbf{v}, \\
    \mbox{subject to:} \quad    & G\mathbf{v}  \leq S_v w[k]+c. \nonumber
\end{align*}
Here, we set the horizon length to $N = 10$ and the control action to be saturated at $-10~\leq~\mathbf{v}~\leq~10$, so $G~=0.1 \times \begin{bmatrix}I_m & -I_m \end{bmatrix}^\top$, $c=[1,\,  1,\, \dots \,1 ,\, 1 ]^\top$ and $S_v = 0$. Furthermore, the MPC quadratic cost function of~\eqref{QP1} is parameterised by
\[
{P} =  \begin{bmatrix}7.1667  & -4.2222\\ -4.2222 & 4.6852\end{bmatrix}, \;  {Q}_i = \begin{bmatrix}1  & -2/3\\ -2/3  & 3/2\end{bmatrix}, \;
{R}_i = 1.
\]
%
%
In a recent paper from~\cite{valmorbida2023quadratic}, it was shown that the MPC control policy can be expressed as
\begin{subequations}\label{mpc_gv}\begin{align}
x&= (I_n-GH^{-1}G^\top)\phi(x) -(S_v+GH^{-1}F)w[k]-c,
\\
v[k] =f(w[k]) &= -H^{-1}Fv[k]-H^{-1}G^\top \phi(x),
\end{align}\end{subequations}
with $\phi(x)$ being the ReLU function. By defining
$\hat{x} =  (I_n-GH^{-1}G^\top)^{-1}(x+Sw[k]+c) = \phi(x),$
it was shown in  \cite{drummond_l4dc} that the MPC policy of~\eqref{mpc_gv} can be re-written as 
\begin{subequations}\begin{align*}
\hat{x} &= \phi( (I_n-GH^{-1}G^\top)\hat{x}-(S_v+GH^{-1}F)w[k]-c) ,
\\
v[k] = g(w[k])  & = -H^{-1}Fw[k]-H^{-1}G^\top \phi(x)\\
&= -H^{-1}Fw[k]-H^{-1}G^\top \hat{x}\,,
\end{align*}\end{subequations}
{\TR that is,} as an implicit neural network in the form of~\eqref{NN1} with  weights and biases
\begin{align*}
W_x &= I_n-GH^{-1}G^\top, \; W_u = (S_v+GH^{-1}F), \; b = c, \\
W_f &= -H^{-1}G^\top, \; W_{f,u} = -H^{-1}F.
\end{align*}


 Theorem~\ref{thm} can then be directly applied to this MPC problem. A neural network of the form of~\eqref{NN2} which trades off accuracy relative to the original MPC policy and its own robustness measured by Definition~\ref{def:robustness} can then be synthesized.  In this example, the tolerances for the weights and biases are set to a fixed value of $ \varepsilon_{W_x} = \varepsilon_{W_u} = \varepsilon_{W_{f,x}} = \varepsilon_{W_{f,u}} =  \varepsilon$. Increasing $\varepsilon$ opens up the search space for the weights,  and so increases the neural network's robustness as it is that property which is optimised by Theorem~\ref{thm}. But, it also means that the gap between the weights of the original and robustified networks can be increased, which can reduce the accuracy relative to the original network of~\eqref{NN1}. Tuning this tolerance $\varepsilon$ allows the trade-off between accuracy and robustness of~\eqref{NN2} to be navigated. 
 \begin{figure}[t]
         \centering
         \includegraphics[width=0.95\textwidth]{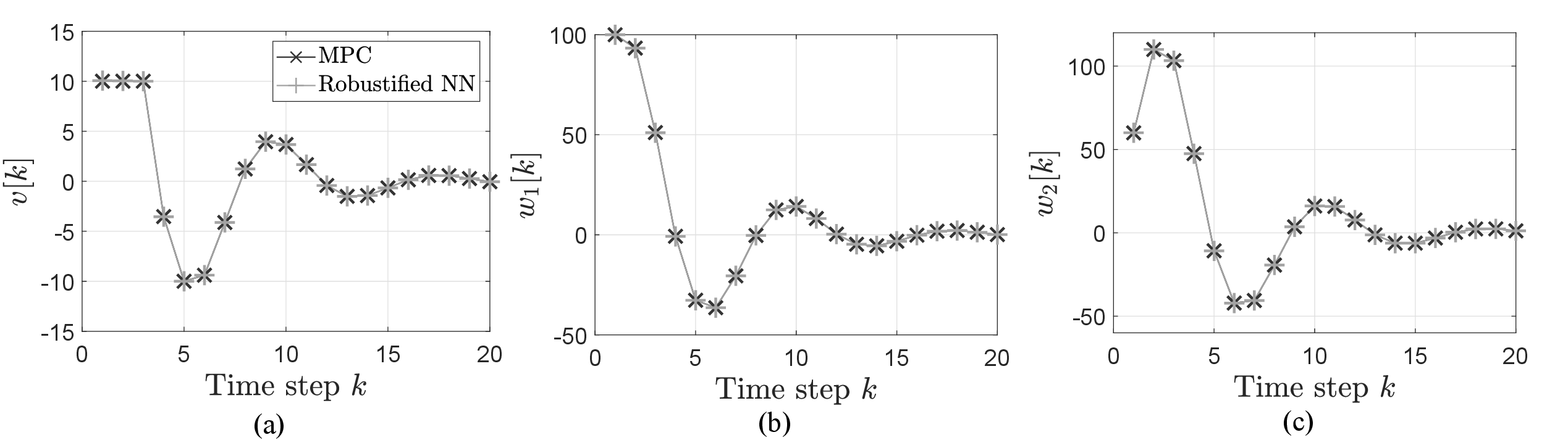}
         \caption{Comparison between MPC controller and neural network generated by Theorem~\ref{thm} with a tolerance of $\varepsilon = 10^{-5}$. (a) Compares the inputs $v[k]$, (b) the state $w_1[k]$, and (c) the state $w_2[k]$.}
         \label{fig:ex1}
     \end{figure}

 \begin{figure}[t]
         \centering
         \includegraphics[width=0.95\textwidth]{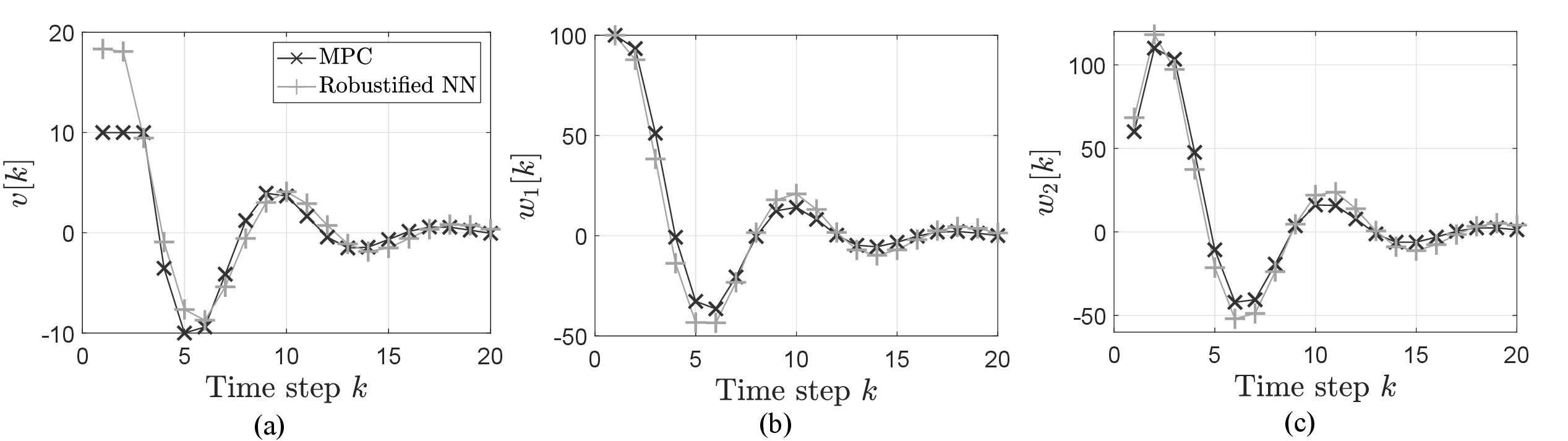}
         \caption{Comparison between MPC controller and neural network generated by Theorem~\ref{thm} with a tolerance of $\varepsilon = 10^{-1}$. (a) Compares the inputs $v[k]$, (b) the state $w_1[k]$, and (c) the state $w_2[k]$.}
         \label{fig:ex2}
     \end{figure}

Figures~\ref{fig:ex1} and~\ref{fig:ex2} compare the response of both the original MPC problem and the robustified neural network generated by Theorem~\ref{thm}. Figures~\ref{fig:ex1} prioritises accuracy with respect to the MPC policy, with $\varepsilon = 10^{-5}$ whereas Figure~\ref{fig:ex2} prioritises robustness, $\varepsilon = 10^{-1}$. This trade-off is captured by the responses, with the $\varepsilon = 10^{-5}$ response able to replicate that of the MPC (and capture the input saturation limits) unlike the $\varepsilon = 10^{-1}$ response. 

 \begin{wrapfigure}{r}{0.45\textwidth}
  \begin{center}
    \includegraphics[width=0.45\textwidth]{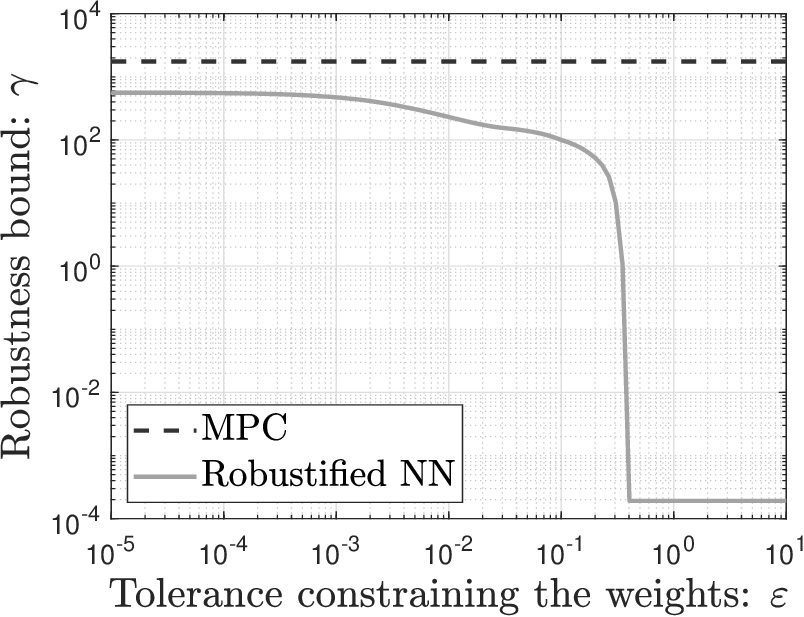}
  \end{center}
  \caption{Trade-off between robustness (measured by $\gamma$ of Definition \ref{def:robustness}) and accuracy (measured by $\varepsilon$ of Definition \ref{def:accuracy}) for the MPC problem.} 
  \label{fig:robustness}
\end{wrapfigure}

A reduction in robustness is the price payed for this increased accuracy with respect to the original MPC policy. Figure~\ref{fig:robustness} demonstrates this trade-off, which plots the robustified neural networks 1-norm bound against the weight tolerance $\varepsilon$. Using the definition of Definition~\ref{def:robustness}, the gains $\gamma_{u,1} = \gamma_{u,2} = 0$ were set to have a single valuable capturing the bound for the 1-norm of the outputs. As  $\varepsilon$ increased, the bound $\gamma$ reduced-- indicating that a more robust network was generated. At $\varepsilon \approx 0.3$, the bound dropped suddenly because the solver had found the most robust network, which in this case was to map the states to zero. Even though this is not a particularly useful network, it is ``optimally robust'', a feature which emphasises the need to trade-off robustness against accuracy. Interestingly, even when $\varepsilon$ was small and so could accurately capture the MPC response (see Figure~\ref{fig:ex1}), the synthesized network was more robust, with $\gamma 
 = 1,751$ of the MPC reduced to $\gamma = 560$ for the neural network.

 





\newpage
\section*{Conclusions}
A method to approximate a neural network by a more robust version was developed. By defining the robustness problem in terms of the 1-norm of the output, the proposed robust neural network synthesis method was shown to be a convex semi-definite programme. An application to robustifying model predictive control feedback policies demonstrated the applicability of the approach.


\appendix
\section{}

{\TR We provide the matrices not given in the main text.} The matrices of Lemmas~\ref{lem:sz}, \ref{lem:su} and~\ref{lem:sg} are, respectively,
{\footnotesize\begin{align}\label{app:lem10}
 \Omega_z(\mathbb{T}_z )  = 
\begin{bmatrix} 0_{2n_g \times 2n_g} & 0_{2n_g \times 2n_u}  & 0_{2n_g \times n} &  0_{2n_g \times n_u}&  0_{2n_g } \\ \star &0_{2n_u \times 2n_u}  & 0_{2n_u \times n} &  0_{2n_u \times n_u} & 0_{2n_u}  \\\ \star  &\star  & He( \mathbb{T}_z \Psi_z-  \mathbb{T}_z) &  \mathbb{T}_z \Psi_u  & 0_{n}  \\ \star  &\star & \star  &   0_{n_u \times n_u} & 0_{n_u }\\  \star  & \star  &  \star  & \star  & 0 \end{bmatrix}\,,
\end{align}}
%
{\footnotesize
\begin{align}\label{app:lem11}
 \Omega_u(\mathbb{T}_{u,1},\mathbb{T}_{u,2})  = 
\begin{bmatrix} 0_{2n_g \times 2n_g} & 0_{2n_g \times 2n_u}  & 0_{2n_g \times n} &  0_{2n_g \times n_u}&  0_{2n_g } \\ 
\star & - \frac{1}{2} \mathbb{T}_{u,2}\begin{bmatrix}I_{n_u}  & 0_{n_u \times n_u} \\ \star  &  I_{n_u} \end{bmatrix}  & 0_{2n_u \times n} &  0_{2n_u \times n_u} & -\frac{1}{2}\mathbb{T}_{u,1}  \begin{bmatrix} \mathbf{1}_{n_u}  \\ \mathbf{1}_{n_u}   \end{bmatrix}
\\ \star  &\star  & 0_{n \times n} & 0_{n \times n_u} & 0_{n}  
\\ \star  &\star & \star  &   -\frac{1}{2}\mathbb{T}_{u,2} I_{n_u}& 0_{n_u }
\\  \star  & \star  &  \star  & \star  &  \mathbb{T}_{u,1} \varepsilon_{u,1}+\varepsilon_{u,2} \mathbb{T}_{u,2}\end{bmatrix}\,,
\end{align}
}
%
{\footnotesize
\begin{align}\label{app:lem12}
 \Omega_g(\mathbb{T}_{g})  = 
\begin{bmatrix}\begin{bmatrix}-\mathbb{T}_{g} & 0_{n_g \times n_g} \\ \star  &-\mathbb{T}_{g} \end{bmatrix}  & 0_{2n_g \times 2n_u}  &  \begin{bmatrix}\mathbb{T}_g \Psi_{g,z} \\ - \mathbb{T}_g \Psi_{g,z}  \end{bmatrix} &    \begin{bmatrix}\mathbb{T}_g\Psi_{g,u} \\ - \mathbb{T}_g\Psi_{g,u}  \end{bmatrix}&  0_{2n_g } \\ 
\star & 0_{2n_u \times 2n_u} & 0_{2n_u \times n} &  0_{2n_u \times n_u} & 0_{2n_u}  
\\ \star  &\star  & 0_{n \times n} & 0_{n \times n_u} & 0_{n}  
\\ \star  &\star & \star  & 0_{n_u \times n_u} & 0_{n_u }
\\  \star  & \star  &  \star  & \star  & 0\end{bmatrix}\,.
\end{align}
}
Finally, the following matrices are used in Theorem~\ref{thm}.
{\footnotesize
\begin{align}
\check{\Omega}_z(\mathbb{T}_z , \mathbb{Y}_z, \mathbb{Y}_u  )  & = 
\begin{bmatrix} 0_{2n_g \times 2n_g} & 0_{2n_g \times 2n_u}  & 0_{2n_g \times n} &  0_{2n_g \times n_u}&  0_{2n_g } \\ \star &0_{2n_u \times 2n_u}  & 0_{2n_u \times n} &  0_{2n_u \times n_u} & 0_{2n_u}  \\\ \star  &\star  & He(\mathbb{Y}_z-  \mathbb{T}_z) & \mathbb{Y}_u  & 0_{n}  \\ \star  &\star & \star  &   0_{n_u \times n_u} & 0_{n_u }\\  \star  & \star  &  \star  & \star  & 0 \end{bmatrix}, \label{eq:thm_matrix_1}
\\
 \check{\Omega}_g(\mathbb{T}_{g},\mathbb{Y}_{g,z},\mathbb{Y}_{g,u})   & = 
\begin{bmatrix}\begin{bmatrix}-\mathbb{T}_{g} & 0_{n_g \times n_g} \\ \star  &-\mathbb{T}_{g} \end{bmatrix}  & 0_{2n_g  \times 2n_u}  &  \begin{bmatrix}\mathbb{Y}_{g,z} \\ - \mathbb{Y}_{g,z}  \end{bmatrix} &   \begin{bmatrix} \mathbb{Y}_{g,u} \\ - \mathbb{Y}_{g,u}  \end{bmatrix}&  0_{2n_g  } \\ 
\star & 0_{2n_u \times 2n_u} & 0_{2n_u \times n} &  0_{2n_u \times n_u} & 0_{2n_u}  
\\ \star  &\star  & 0_{n \times n} & 0_{n \times n_u} & 0_{n}  
\\ \star  &\star & \star  & 0_{n_u \times n_u} & 0_{n_u }
\\  \star  & \star  &  \star  & \star  & 0\end{bmatrix}, \label{eq:thm_matrix_2}
\\
 \Omega_\gamma(\gamma, \gamma_{u,1}, \gamma_{u,2})   & = 
\begin{bmatrix} 0_{2n_g \times 2n_g} & 0_{2n_g \times 2n_u}  & 0_{2n_g \times n} & 0_{2n_g \times n_u}& \frac{1}{2} \begin{bmatrix} \mathbf{1}_{n_g}  \\ \mathbf{1}_{n_g}   \end{bmatrix} \\ 
\star & - \frac{1}{2}\gamma_{u,2}\,I_{2n_u}  & 0_{2n_u \times n} &  0_{2n_u \times n_u} & -\frac{\gamma_{u,1}}{2}  \begin{bmatrix} \mathbf{1}_{n_u}  \\ \mathbf{1}_{n_u}   \end{bmatrix}
\\ \star  &\star  & 0_{n \times n} & 0_{n \times n_u} & 0_{n}  
\\ \star  &\star & \star  &   -\frac{1}{2}\gamma_{u,2}I_{n_u} & 0_{n_u }
\\  \star  & \star  &  \star  & \star  & -\gamma\end{bmatrix}\,. \label{eq:thm_matrix_3}
\end{align}
}

\acks{Ross Drummond was supported by a UK Intelligence Community Research Fellowship from the Royal Academy of Engineering. }

\bibliography{bibliog}

\begin{thebibliography}{27}
\providecommand{\natexlab}[1]{#1}
\providecommand{\url}[1]{\texttt{#1}}
\expandafter\ifx\csname urlstyle\endcsname\relax
  \providecommand{\doi}[1]{doi: #1}\else
  \providecommand{\doi}{doi: \begingroup \urlstyle{rm}\Url}\fi

\bibitem[Bai et~al.(2019)Bai, Kolter, and Koltun]{bai2019deep}
Shaojie Bai, J~Zico Kolter, and Vladlen Koltun.
\newblock Deep equilibrium models.
\newblock \emph{Advances in Neural Information Processing Systems}, 32, 2019.

\bibitem[Barabanov and Prokhorov(2002)]{barabanov2002stability}
Nikita~E Barabanov and Danil~V Prokhorov.
\newblock Stability analysis of discrete-time recurrent neural networks.
\newblock \emph{IEEE Transactions on Neural Networks}, 13\penalty0 (2):\penalty0 292--303, 2002.

\bibitem[Chu and Glover(1999{\natexlab{a}})]{chu1999bounds}
Yun-Chung Chu and Keith Glover.
\newblock Bounds of the induced norm and model reduction errors for systems with repeated scalar nonlinearities.
\newblock \emph{IEEE Transactions on Automatic Control}, 44\penalty0 (3):\penalty0 471--483, 1999{\natexlab{a}}.

\bibitem[Chu and Glover(1999{\natexlab{b}})]{chu1999stabilization}
Yun-Chung Chu and Keith Glover.
\newblock Stabilization and performance synthesis for systems with repeated scalar nonlinearities.
\newblock \emph{IEEE Transactions on Automatic Control}, 44\penalty0 (3):\penalty0 484--496, 1999{\natexlab{b}}.

\bibitem[Drummond et~al.(2022{\natexlab{a}})Drummond, Duncan, Turner, Pauli, and Allgower]{drummond2022bounding}
Ross Drummond, Stephen Duncan, Mathew Turner, Patricia Pauli, and Frank Allgower.
\newblock Bounding the difference between model predictive control and neural networks.
\newblock In \emph{Learning for Dynamics and Control Conference}, pages 817--829. PMLR, 2022{\natexlab{a}}.

\bibitem[Drummond et~al.(2022{\natexlab{b}})Drummond, Turner, and Duncan]{drummond2022reduced}
Ross Drummond, Matthew~C Turner, and Stephen~R Duncan.
\newblock Reduced-order neural network synthesis with robustness guarantees.
\newblock \emph{IEEE Transactions on Neural Networks and Learning Systems}, 2022{\natexlab{b}}.

\bibitem[Drummond et~al.(2024)Drummond, Baldivieso, and Valmorbida]{drummond_l4dc}
Ross Drummond, Ross Baldivieso, and Giorgio Valmorbida.
\newblock Mapping back and forth between model predictive control and neural networks.
\newblock In \emph{Submitted to Learning for Dynamics and Control}. PMLR, 2024.

\bibitem[Dullerud and Paganini(2013)]{dullerud2013course}
Geir~E Dullerud and Fernando Paganini.
\newblock \emph{A course in robust control theory: {A} convex approach}, volume~36.
\newblock Springer Science \& Business Media, 2013.

\bibitem[El~Ghaoui et~al.(2021)El~Ghaoui, Gu, Travacca, Askari, and Tsai]{el2021implicit}
Laurent El~Ghaoui, Fangda Gu, Bertrand Travacca, Armin Askari, and Alicia Tsai.
\newblock Implicit deep learning.
\newblock \emph{SIAM Journal on Mathematics of Data Science}, 3\penalty0 (3):\penalty0 930--958, 2021.

\bibitem[Fazlyab et~al.(2019)Fazlyab, Robey, Hassani, Morari, and Pappas]{fazlyab2019efficient}
Mahyar Fazlyab, Alexander Robey, Hamed Hassani, Manfred Morari, and George Pappas.
\newblock Efficient and accurate estimation of {L}ipschitz constants for deep neural networks.
\newblock \emph{Advances in Neural Information Processing Systems}, 32, 2019.

\bibitem[Furieri et~al.(2022)Furieri, Galimberti, and Ferrari-Trecate]{furieri2022neural}
Luca Furieri, Clara~Luc{\'\i}a Galimberti, and Giancarlo Ferrari-Trecate.
\newblock Neural system level synthesis: {L}earning over all stabilizing policies for nonlinear systems.
\newblock In \emph{Procs. of the Conference on Decision and Control (CDC)}, pages 2765--2770. IEEE, 2022.

\bibitem[Green and Limebeer(2012)]{green2012linear}
Michael Green and David~JN Limebeer.
\newblock \emph{Linear robust control}.
\newblock Courier Corporation, 2012.

\bibitem[Junnarkar et~al.(2022)Junnarkar, Yin, Gu, Arcak, and Seiler]{junnarkar2022synthesis}
Neelay Junnarkar, He~Yin, Fangda Gu, Murat Arcak, and Peter Seiler.
\newblock Synthesis of stabilizing recurrent equilibrium network controllers.
\newblock In \emph{Procs. of the Conference on Decision and Control (CDC)}, pages 7449--7454. IEEE, 2022.

\bibitem[Li et~al.(2021)Li, Drummond, and Duncan]{li2021robust}
Jiaqi Li, Ross Drummond, and Stephen~R Duncan.
\newblock Robust error bounds for quantised and pruned neural networks.
\newblock In \emph{Learning for Dynamics and Control}, pages 361--372. PMLR, 2021.

\bibitem[Newton and Papachristodoulou(2022)]{newton2022stability}
Matthew Newton and Antonis Papachristodoulou.
\newblock Stability of non-linear neural feedback loops using sum of squares.
\newblock In \emph{Prcos. of the Conference on Decision and Control (CDC)}, pages 6000--6005. IEEE, 2022.

\bibitem[Newton and Papachristodoulou(2023)]{newton2023sparse}
Matthew Newton and Antonis Papachristodoulou.
\newblock Sparse polynomial optimisation for neural network verification.
\newblock \emph{Automatica}, 157:\penalty0 111233, 2023.

\bibitem[Pauli et~al.(2021{\natexlab{a}})Pauli, Gramlich, Berberich, and Allg{\"o}wer]{pauli2021linear}
Patricia Pauli, Dennis Gramlich, Julian Berberich, and Frank Allg{\"o}wer.
\newblock Linear systems with neural network nonlinearities: {I}mproved stability analysis via acausal {Z}ames-{F}alb multipliers.
\newblock In \emph{Procs. of the Conference on Decision and Control (CDC)}, pages 3611--3618. IEEE, 2021{\natexlab{a}}.

\bibitem[Pauli et~al.(2021{\natexlab{b}})Pauli, Koch, Berberich, Kohler, and Allg{\"o}wer]{pauli2022neural}
Patricia Pauli, Anne Koch, Julian Berberich, Paul Kohler, and Frank Allg{\"o}wer.
\newblock Training robust neural networks using {L}ipschitz bounds.
\newblock \emph{IEEE Control Systems Letters}, 6:\penalty0 121--126, 2021{\natexlab{b}}.

\bibitem[Pokle et~al.(2022)Pokle, Geng, and Kolter]{pokle2022deep}
Ashwini Pokle, Zhengyang Geng, and J~Zico Kolter.
\newblock Deep equilibrium approaches to diffusion models.
\newblock \emph{Advances in Neural Information Processing Systems}, 35:\penalty0 37975--37990, 2022.

\bibitem[Revay et~al.(2023)Revay, Wang, and Manchester]{revay2023recurrent}
Max Revay, Ruigang Wang, and Ian~R Manchester.
\newblock Recurrent equilibrium networks: {F}lexible dynamic models with guaranteed stability and robustness.
\newblock \emph{IEEE Transactions on Automatic Control}, 2023.

\bibitem[Richardson et~al.(2023)Richardson, Turner, and Gunn]{richardson2023strengthened}
Carl~R Richardson, Matthew~C Turner, and Steve~R Gunn.
\newblock Strengthened {C}ircle and {P}opov criteria for the stability analysis of feedback systems with {ReLU} neural networks.
\newblock \emph{IEEE Control Systems Letters}, 2023.

\bibitem[Stein(2003)]{stein2003respect}
Gunter Stein.
\newblock Respect the unstable.
\newblock \emph{IEEE Control systems magazine}, 23\penalty0 (4):\penalty0 12--25, 2003.

\bibitem[Valmorbida and Hovd(2023)]{valmorbida2023quadratic}
Giorgio Valmorbida and Morten Hovd.
\newblock Quadratic programming with ramp functions and fast online {QP-MPC} solutions.
\newblock \emph{Automatica}, 153:\penalty0 111011, 2023.

\bibitem[Wang and Manchester(2023)]{wang2023direct}
Ruigang Wang and Ian Manchester.
\newblock Direct parameterization of {L}ipschitz-bounded deep networks.
\newblock In \emph{International Conference on Machine Learning}, pages 36093--36110. PMLR, 2023.

\bibitem[Wang et~al.(2022)Wang, Prakriya, and Jha]{wang2022quantitative}
Zi~Wang, Gautam Prakriya, and Somesh Jha.
\newblock A quantitative geometric approach to neural-network smoothness.
\newblock \emph{Advances in Neural Information Processing Systems}, 35:\penalty0 34201--34215, 2022.

\bibitem[Wang et~al.(2024)Wang, Hu, Havens, Araujo, Zheng, Chen, and Jha]{wang2024scalability}
Zi~Wang, Bin Hu, Aaron~J Havens, Alexandre Araujo, Yang Zheng, Yudong Chen, and Somesh Jha.
\newblock On the scalability and memory efficiency of semidefinite programs for {L}ipschitz constant estimation of neural networks.
\newblock In \emph{The Twelfth International Conference on Learning Representations}, 2024.

\bibitem[Zames and Falb(1968)]{zames1968stability}
George Zames and PL~Falb.
\newblock Stability conditions for systems with monotone and slope-restricted nonlinearities.
\newblock \emph{SIAM Journal on Control}, 6\penalty0 (1):\penalty0 89--108, 1968.

\end{thebibliography}
\end{document}